\documentclass{article}
\usepackage[utf8]{inputenc}
\usepackage{amsmath,amsfonts,amsthm}
\usepackage{natbib}
\usepackage{booktabs}
\usepackage{enumerate}
\usepackage{graphicx}
\usepackage{longtable}
\usepackage{setspace}
\usepackage[section]{placeins}
\usepackage[T1]{fontenc}
\usepackage{svg}
\usepackage{adjustbox}
\usepackage{svg-extract}

\bibliographystyle{unsrtnat}
\setcitestyle{authoryear}
\defcitealias{sffa}{\textit{SFFA v. Harvard,} 2022}

\newcommand{\E}{\mathbb{E}}

\theoremstyle{plain}
\newtheorem{proposition}{Proposition}
\theoremstyle{definition}
\newtheorem{definition}[proposition]{Definition}
\newtheorem{assumption}[proposition]{Assumption}
\newtheorem{example}[proposition]{Example}
\theoremstyle{remark}
\newtheorem*{remark}{Remark}
\usepackage{hyperref}
\usepackage{import}

\title{Quantifying the Reliance of Black-Box Decision-Makers on Variables of Interest}
\author{Daniel Vebman \\ Cornell University \\ {\small dov3@cornell.edu}}
\date{May 2023}

\begin{document}

\onehalfspacing

\maketitle

\section*{Abstract}

This paper introduces a framework for measuring how much black-box decision-makers rely on variables of interest. The framework adapts a permutation-based measure of variable importance from the explainable machine learning literature. With an emphasis on applicability, I present some of the framework's theoretical and computational properties, explain how reliance computations have policy implications, and work through an illustrative example. In the empirical application to interruptions by Supreme Court Justices during oral argument, I find that the effect of gender is more muted compared to the existing literature's estimate; I then use this paper's framework to compare Justices' reliance on gender and alignment to their reliance on experience, which are incomparable using regression coefficients.

\hfill \break Keywords: econometrics, interpretability, fairness, explainable machine learning, gender.

\section{Introduction}

Decision-makers routinely choose among some menu of options without explaining why.
We are interested in understanding which observed variables are important to the decision-maker: does the judge rely on sex, being more lenient towards women \citep{Goulette}? Does the doctor rely on race, assuming that Black people have higher pain tolerance \citep{Staton}? Or, as the Supreme Court recently decided, do (Harvard) admissions officers discriminate against Asians \citepalias{sffa}? 
Understanding a decision-making process has inherent value and direct policy implications. To correct an iniquity, do judges need to be trained about their cognitive biases, do medical textbooks need to be rewritten, or do admissions practices need to change? 

The difficulty is that the decision-maker is a black box, whom we cannot query for new data points.
Human brains are figuratively black boxes and literally neural nets.
Combining the CS literature on ML interpretability with econometric work on the prediction of counterfactual choices---including in partially identified settings---presents a novel way to understand and, if needed, adjust how decision-makers make choices in the real world.

This paper introduces a framework for measuring how much black-box decision-makers rely on variables of interest. The approach is inspired by a permutation-based measure of variable importance from the explainable machine learning literature; \citet{Breiman_Manual} originally presented this method in his study of random forests, and \citet*{Fisher} generalized it to arbitrary models. This paper's contributions are:
\begin{enumerate}
    \item I generalize Fisher et al.'s approach beyond ML, to human behavior. 
    \item I prove in proposition \ref{prop:square-loss-r-vs-b} that this framework encapsulates and can test for conditional statistical parity, a fairness metric from the explainable machine learning literature.
    \item Propositions \ref{prop:square-loss-r-vs-b} and \ref{prop:equivalent-ranking} theoretically justify Fisher et al.'s normalization by a baseline.
    \item Studying interruptions by Supreme Court Justices in section \ref{sec:supreme-court-interruptions}, I find smaller effects than the existing literature. I also rank each Justice's reliance on gender, alignment, and experience, which were previously not compared.
\end{enumerate}

\section{Related Work}

\subsection{Variable Importance Techniques from Economics and the Social Sciences}

Importance analysis quantifies how much certain variables contribute to uncertainty around the outcome \citep{Coyle}. Correlation- or variance-based measures are the most common importance measures. Simple correlation coefficients and contributions to variance fall in this category. 
Probability-, elasticity-, and information/entropy-based measures are also common in the economics and social sciences literatures \citep{Coyle}. 

Variable importance is distinct from and offers advantages over typical alternatives.
First, consider a linear model. Different units prevent direct comparisons of coefficients to each other. For example, as explored in the empirical application in section \ref{sec:supreme-court-interruptions}, consider a model for the number of times a Supreme Court Justice interrupts an advocate during oral argument, $Interruption Rate = \beta_0 + \beta_1 Gender + \beta_2 Experience + \beta_3 Justice And Advocate Agree$. We can directly compare $\beta_1$ to $\beta_3$ since the covariates they multiply are both binary. It is nonsensical, however, to compare a marginal increase in $Experience$ to a change in $Gender$ from male to female -- the units are completely different. The variable-importance framework presented here overcomes this limitation and allows direct comparisons of any two variables. 

A second advantage of variable importance is that it is sensitive to the relationship between the different inputs. As illustrated in the school admissions example in section \ref{sec:school-admissions}, the coefficients on the model do not alone dictate how much a variable affects the model's output. More generally, the distinction between statistical and economic significance suggests an important analogue to variable importance for decision-makers: even though a variable might have a large effect on a hypothetical input to the model, it might only have limited importance for a typical input, in practice. 

\subsection{Variable Importance Techniques from Explainable Machine Learning}

The ML explainability literature studies \textit{post-hoc} methods to understand why a model arrives at its outputs \citep{Barredo}.
It includes the general class of variable importance (or feature relevance) measures, which quantify how much a prediction model's accuracy depends on the information in each covariate. This class contains Shapley values \citep[e.g.,][]{Lundberg}, perturbation-based measures \citep[e.g.,][]{Robnik}, and permutation-based measures \citep[e.g.,][]{Breiman}. Perturbation-based measures observe the change in a loss function for a small or infinitesimal change to the inputs. Permutation-based measures observe the change in loss after shuffling independent observations together to sever the correlation between the variable of interest and the other covariates/label.

Permutation-based measures are well-suited to black-box models because they only require a mapping from inputs to outputs. They do not need to rely on gradients or any other knowledge of the model.
Indeed, model-agnostic explanations are increasingly popular because they are widely applicable \citep{Balagopalan}.
Proposed by \citet{Ribeiro}, Local Interpretable Model-agnostic Explanations (LIME) and its variations are among the most well-known such approaches. 
LIME is a local technique, which explains how the model made a single decision by approximating a small region of the decision boundary. 
In contrast, global methods approximate the entire function with a more interpretable surrogate (like a tree-based or sparse linear model) or otherwise summarize the entire decision model. 

This paper centers on a permutation-based measure of variable importance. \citet{Breiman, Breiman_Manual} first developed this approach in his study of random forests. \citet{Gregorutti_Michel_Saint-Pierre_2017} use this method in a variable-selection algorithm, specifically for random forests that model the conditional expectation function. \citet{Fisher} suggest this permutation-based approach as a generic measure of variable importance for any model and with arbitrary loss function; they further study how to bound this measure for sets of models that perform roughly equally well on the same prediction problem.

Like Gregorutti et al., I study the conditional expectation function only; however, whereas they study random forests and typically use the square loss, I generalize to any estimator and arbitrary loss. 
I use almost the same generic definition of reliance as Fisher et al.; but, unlike them, I study the conditional expectation function in the context of human decision-making and discuss how reliance values could be used in various public-policy settings. 

\section{Framework}

\subsection{Setup}

A decision-maker chooses an alternative $Y \in \mathcal{Y}$ depending on covariates, which we partition into $X_1 \in \mathcal{X}_1$ and $X_2 \in \mathcal{X}_2$. 
In this part, we define a formal measure of the decision-maker's reliance on $X_1$. 

We define reliance by rephrasing our question as a prediction problem:
the decision rule induces a joint distribution $(Y, X_1, X_2) \sim P$ on the choice and covariates. 
We have an oracle model $\mathbb{E}[Y | X_1, X_2]$, where $(Y, X_1, X_2) \sim P$, which tells us the decision-maker's choice. 
The oracle allows us to make counterfactual queries to the choice function. 

By definition, the oracle assumes that its inputs and outputs are distributed according to $P$. To measure the oracle's reliance on $X_1$, we therefore observe how much the oracle errs when we replace $X_1$ with noise.
Specifically, we make $X_1$ completely uninformative of $Y$ and $X_2$, while preserving their marginal distributions. We make this intuitive notion precise in the next section.

\subsection{Definition of Reliance}

Let $(Y^a, X_1^a, X_2^a)$ and $(Y^b, X_1^b, X_2^b)$ be two independent draws from $P$.
Splice $X_1^b$ into the $a$ draw to create the coupling $(Y^a, X_1^b, X_2^a)$.
$X_1^b$ and $(Y^a, X_2^a)$ have the same marginal distributions as before, but they are now independent.
We want to compare $Y^a$ to the oracle's prediction for the pair $(X_1^b, X_2^a)$.

\begin{definition}[Oracle's prediction]
    \label{def:oracle-prediction}
    The oracle's prediction for the pair $(x_1, x_2) \in \mathcal{X}_1 \times \mathcal{X}_2$ is
    $$f(x_1, x_2) = \E[Y \mid X_1 = x_1, X_2 = x_2],$$ 
    where $(Y, X_1, X_2) \sim P$ as before. 
\end{definition}

We want to measure how much the oracle errs when we feed it an $X_1$ that is completely uninformative of $Y$. To quantify this change, we need a loss function $L$. We also require the following technical assumption:

\begin{assumption}
    \label{assump:square-loss-technical-assumption-abs-cont}
    Assume that the coupling $P_{X_1^b, X_2^a}$ is absolutely continuous with respect to $P_{X_1^a, X_2^a}$.
    Recall $P_{X_1^b, X_2^a} = P_{X_1}P_{X_2}$ and $P_{X_1^a, X_2^a} = P_{X_1, X_2}$.
    The subscripts on $P$ refer to the respective marginal distributions.
\end{assumption}

\begin{remark}
    This assumption is necessary so that the oracle's prediction is well-defined over the shuffling. For example, if $X_1$ and $X_2$ are binary but are never equal, then this assumption fails. When we try to shuffle $X_1$, we will fail to compute $\E[Y \mid X_1 = 0, X_2 = 0]$ since the conditioned event has probability 0.
\end{remark}

We can now formally define model reliance:

\begin{definition}[Reliance on $X_1$]
    \label{def:reliance}
    Given a loss function $L : \mathcal{Y} \times \mathcal{Y} \rightarrow \mathbb{R}$, and a partition $(X_1, X_2)$ of the covariates, the reliance on $X_1$ is
    $$r = \E_{Y^a, X_1^b, X_2^a} L(Y^a,\, f(X_1^b, X_2^a))$$
    where $(Y^a, X_2^a) \sim P_{Y,X_2}$ and $X_1^b \sim P_{X_1}$ are independent.
\end{definition}

In general, we might require that the loss function $L$ admit some or all of the following kinds of statements:
\begin{enumerate}
    \item Rankings of variables within one distribution:
    The Justice relies more on gender than on experience.
    \item Rankings of one variable across distributions:
    Justice P relies more on gender than Justice Q does.
    \item Rankings of different variables across distributions:
    Justice P relies more on gender than Justice Q relies on experience.
\end{enumerate}
The theoretical exposition will justify that these three statements are sensical, and the applications will demonstrate that they are natural and valuable. We will make all these kinds of statements in the application to Supreme Court Justices in section \ref{sec:supreme-court-interruptions}.
First, I provide a few examples of loss functions and explain their properties.

\begin{example}[Square Loss]
    \label{ex:square-loss}
    A simple choice for $L$ is the square loss, 
    $$L(y, \hat{y}) = (y - \hat{y})^2.$$
    That is,
    $$r = \E_{Y^a, X_1^b, X_2^a} (Y^a - f(X_1^b, X_2^a))^2.$$
    We will show in the next proposition that, by using the square loss, we can interpret reliance values with respect to a baseline. Further, the reliance and the baseline are equal if and only if the decision $Y$ is conditionally mean independent of $X_1$ given $X_2$. Thus, this reliance measure can test \textit{conditional statistical parity}, a fairness metric from the machine learning literature.

    \begin{assumption}
        \label{assump:square-loss-technical-assumption-L2}
        Let $(\mathcal{Y} \times \mathcal{X}_1 \times \mathcal{X}_2, \mathcal{F}, P)$ be a probability space.
        In what follows, assume the random vector $(Y, X_1, X_2)$ on this space is $L^2$. 
        As before, $(Y, X_1, X_2)$, $(Y^a, X_1^a, X_2^a)$ and $(Y^b, X_1^b, X_2^b)$ are independent and identically distributed.
    \end{assumption}
    
    \begin{proposition}
        \label{prop:square-loss-r-vs-b}
        To simplify notation, let $f(x_2) = \E[Y \mid X_2 = x_2]$, where $(Y, X_2) \sim P_{Y,X_2}$.
        Define the \textit{baseline reliance}, 
        $$b = \E_{Y^a, X_1^a, X_2^a} (Y^a - f(X_1^a, X_2^a))^2,$$
        where we do not shuffle $X_1$. Then, $r \geq b$. 
        Furthermore, the following are equivalent:
        \begin{enumerate}
            \item $r = b$;
            \item $f(X_1^a, X_2^a) = f(X_1^b, X_2^a)$ a.s.
            with respect to $(X_1^a, X_2^a, X_1^b) \sim P_{X_1,X_2} P_{X_1}$;
            \item $f(X_1^a, X_2^a) = f(X_2^a)$ a.s.
            with respect to $(X_1^a, X_2^a) \sim P_{X_1,X_2}$; and
            \item $Y$ is conditionally mean independent of $X_1$ given $X_2$ almost surely. 
        \end{enumerate}
    \end{proposition}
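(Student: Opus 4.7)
The plan is to reduce the entire proposition to a single orthogonal decomposition of the squared error, and then dispatch the four equivalences in sequence. First I would write
\[Y^a - f(X_1^b, X_2^a) = \bigl[Y^a - f(X_1^a, X_2^a)\bigr] + \bigl[f(X_1^a, X_2^a) - f(X_1^b, X_2^a)\bigr],\]
square, and take expectation. The squared first term is $b$ by definition. For the cross term, I condition on $(X_1^a, X_2^a, X_1^b)$; since $X_1^b$ is independent of $(Y^a, X_1^a, X_2^a)$ and $\E[Y^a \mid X_1^a, X_2^a] = f(X_1^a, X_2^a)$ by definition of the oracle, the inner conditional expectation of $Y^a - f(X_1^a, X_2^a)$ vanishes. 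Hence
\[r \;=\; b + \E\bigl[f(X_1^a, X_2^a) - f(X_1^b, X_2^a)\bigr]^2,\]
which gives $r \geq b$ and the equivalence of (1) and (2) at once, with (2) read under the joint law $P_{X_1,X_2} P_{X_1}$ stated in the proposition.

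The equivalence of (3) and (4) is just unpacking definitions, since $f(X_1, X_2) = \E[Y \mid X_1, X_2]$ and $f(X_2) = \E[Y \mid X_2]$, so conditional mean independence of $Y$ from $X_1$ given $X_2$ is exactly the identity in (3). The substantive remaining step is therefore (2) $\Leftrightarrow$ (3).

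For (3) $\Rightarrow$ (2), if $f(X_1, X_2) = f(X_2)$ holds $P_{X_1, X_2}$-a.s., then by Assumption~\ref{assump:square-loss-technical-assumption-abs-cont} the same equality holds $P_{X_1} P_{X_2}$-a.s.; applying this under both laws shows $f(X_1^a, X_2^a) = f(X_2^a) = f(X_1^b, X_2^a)$ almost surely under the joint law in (2). For the reverse direction I would use the elementary observation that if independent random variables $U, V$ satisfy $h(U) = k(V)$ a.s., then each is a.s.\ constant (by Fubini: for a.e.\ realized $u$, $k(V) = h(u)$ a.s.). Applied after conditioning on $X_2^a = x_2$, this shows that for $P_{X_2}$-a.e.\ $x_2$ the section $f(\cdot, x_2)$ is a.s.\ equal under $P_{X_1 \mid X_2 = x_2}$ to some constant $C(x_2)$. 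Then $f(X_1, X_2) = C(X_2)$ a.s.\ under $P_{X_1, X_2}$, and taking $\E[\,\cdot \mid X_2\,]$ on both sides pins down $C(X_2) = \E[Y \mid X_2] = f(X_2)$.

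The main obstacle is this last step: the orthogonal decomposition and the other equivalences are routine, but (2) $\Rightarrow$ (3) requires careful use of independence together with a Fubini slicing on $X_2^a$, and (3) $\Rightarrow$ (2) is where Assumption~\ref{assump:square-loss-technical-assumption-abs-cont} does real work (without it, a $P_{X_1, X_2}$-null set on which $f$ misbehaves could carry positive $P_{X_1} P_{X_2}$-mass and break the implication). Everything else, including the $L^2$ assumption needed to make the squared expectations finite and the conditioning manipulations valid, is standard bookkeeping.
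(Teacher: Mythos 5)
Your proof is correct, but both halves take a genuinely different route from the paper's. For $r \geq b$ and $1 \Leftrightarrow 2$, the paper does not expand the square: it lower-bounds $r$ by the minimum mean squared prediction error over all measurable functions of $(X_1^a, X_1^b, X_2^a)$, identifies the minimizer as $f(X_1^a, X_2^a)$ using the independence of $X_1^b$, and invokes a.s.-uniqueness of that minimizer to get the equivalence with condition 2. Your orthogonal decomposition
\[
r \;=\; b + \E\bigl[f(X_1^a, X_2^a) - f(X_1^b, X_2^a)\bigr]^2
\]
delivers both facts in one exact identity and is arguably more transparent, since the excess reliance is exhibited explicitly as a squared $L^2$ distance. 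For $2 \Leftrightarrow 3$, the paper argues both directions by contrapositive, constructing positive-measure ``bad'' sets ($U$, $U^a$, $U^b$, and the fibers $B(x_1^a, x_2^a)$) and transporting positive mass between $P_{X_1}P_{X_2}$ and $P_{X_1,X_2}$; your direct argument --- pushing the null set $\{f(x_1,x_2) \ne f(x_2)\}$ through Assumption~\ref{assump:square-loss-technical-assumption-abs-cont} for $3 \Rightarrow 2$, and for $2 \Rightarrow 3$ slicing on $X_2^a$, using that two independent random variables which agree a.s.\ are a.s.\ constant, and pinning down the constant via the tower property --- reaches the same conclusion with far less set-theoretic bookkeeping, and it locates the use of absolute continuity in the same direction ($3 \Rightarrow 2$) as the paper does. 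The only loose ends worth tightening are the measurability of $x_2 \mapsto C(x_2)$ (take $C(x_2) = \E[Y \mid X_2 = x_2]$ directly) and the disintegration step asserting that the a.s.\ identity in condition 2 holds conditionally for $P_{X_2}$-a.e.\ $x_2$; both are routine.
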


    \begin{proof}\hyperlink{pf:square-loss-r-vs-b}{Proof in appendix.}\end{proof}

    \begin{remark}
        We prove the chain 1 $\Leftrightarrow$ 2 $\Leftrightarrow$ 3, and 4 simply rephrases 3.
        2 $\Leftrightarrow$ 3 is not trivial because the conditions hold almost surely over related but different distributions.
    \end{remark}

    Because $r \geq b$ always, we justifiably call $b$ the \textit{baseline reliance} of $Y$ on $X_1$. We can interpret $b$ as the intrinsic noise of $P$ --- it is the loss that even the best predictor incurs. 
    \citet{Fisher} broadly define their reliance measure as the ratio $r / b$, but for arbitrary loss function. 
    However, it is generally false that $b$ minimizes the loss for arbitrary $L$, which challenges $b$'s use as a baseline;
    furthermore, as demonstrated in section \ref{sec:comparing-reliance}, this definition unnecessarily complicates comparisons within the same distribution.
    We return to the idea of partialling out the intrinsic noise in example \ref{ex:cross-entropy-loss} and in section \ref{sec:comparing-reliance} on comparing reliance across multiple choice distributions.

    This proposition also provides strong intuition for how to interpret reliance values:
    in the context of binary decisions, conditional mean independence is equivalent to \textit{conditional statistical parity}, a fairness metric from the explainable machine learning literature that captures ``fairness through blindness'' \citep[see][]{DBLP:journals/corr/Corbett-DaviesP17}. In the fairness setting, $X_1$ are the \textit{sensitive} attributes and $X_2$ are the \textit{legitimate} attributes. 
    \citet[][Section 10.2]{Fisher} only identify that reliance is ``related'' to this fairness metric.
    Contributing to the explainable machine learning literature, this proposition proves that $r = b$ is equivalent to and therefore can test for conditional statistical parity.
\end{example}

\begin{example}[Cross-Entropy Loss]
    \label{ex:cross-entropy-loss}
    In a typical binary classification setting where the oracle returns probabilities in the range $[0,1]$, it is unclear how to interpret the actual values taken by the square loss. 
    The cross-entropy loss is commonly used in the machine learning literature in such settings:
    \begin{align*}
        \mathbb{E}_{Y^a, X_1^b, X_2^a} [
            Y^a &\log \E[Y \mid X_1 = X_1^b, X_2 = X_2^a] \\
            &+ (1 - Y^a) \log (1 - \E[Y \mid X_1 = X_1^b, X_2 = X_2^a]) ] 
    \end{align*}
    This equals the expected cross-entropy $H( Y \mid (X_2 = X_2^a) ,\; Y \mid (X_1 = X_1^b, X_2 = X_2^a) )$. A neat interpretation follows from the relation to the Kullback-Leibler divergence:
    \begin{align*}
        &H\left( 
            Y \mid (X_2 = X_2^a), \;
            Y \mid (X_1 = X_1^b, X_2 = X_2^a)
        \right) \\
        &= H(Y \mid (X_2 = X_2^a)) \\
        &\qquad+ D_{KL}\left( 
            Y \mid (X_2 = X_2^a)
            \mid\mid
            Y \mid (X_1 = X_1^b, X_2 = X_2^a)
        \right) \\
        &\text{(cross-entropy = entropy + divergence)}
    \end{align*}
    The Kraft-McMillan theorem establishes the optimal number of bits to code messages that follow a specific distribution.
    Returning to our setting, this theorem implies the following interpretations:
    \begin{enumerate}
        \item The \textit{entropy} term equals the optimal expected number of bits needed to code a draw from $Y \mid (X_2 = X_2^a)$. 
        \item The \textit{cross-entropy} term equals the optimal expected number of bits needed to code a draw from $Y \mid (X_2 = X_2^a)$ if we wrongly assume that our draws come from $Y \mid (X_1 = X_1^b, X_2 = X_2^a)$. 
        \item The \textit{divergence} term equals the excess number of bits we need to code a draw from $Y \mid (X_2 = X_2^a)$ if we mistakenly assume that it is drawn from $Y \mid (X_1 = X_1^b, X_2 = X_2^a)$.
    \end{enumerate}
    It a mistake to assume that the data are drawn from $Y \mid (X_1 = X_1^b, X_2 = X_2^a)$ because $X_1^b \perp X_2^a$. Reliance measures the cost of this mistake.
\end{example}

\begin{example}[Context-Specific Loss]
    However, for public-policy questions, it may be difficult to translate such an information-based interpretation into the language of policy. More broadly, how does one \textit{evaluate} the loss once it is computed?
    For example, consider the scenario where judges decide if to detain ($y = 1$) or release ($y = 0$) a defendant pretrial based on some covariates $x$.
    Suppose we assert that all judges \textit{should} maximize the utility function 
    $$u(y,x) = - y P(S=0 \mid X=x) - \lambda (1 - y) P(S=1 \mid X=x)$$
    where $S = 1$ indicates the event that the defendant with characteristics $X = x$ would skip trial if released, and $S = 0$ means that they would not skip if released. $\lambda$ is a policy-preference weight.
    According to this $u$, judges should minimize the cost of a mistake.
    Next, adjust our reliance measure to be
    $$r = \E_{Y^a, X_1^a, X_2^a, X_1^b} L(Y^a, \E[Y \mid X_1 = X_1^b, X_2 = X_2^a];\; (X_1^a, X_2^a))$$
    where
    $$L(y, \hat{y}; x) = u(y, x) - u(\hat{y}, x).$$
    This captures how much the judge relies on $X_1$ to attain his solution to the maximum-utility problem.
    If $r > 0$, then the judge uses information in $X_1$ to increase utility, i.e., minimize mistakes.
    If $r < 0$, then the judge's reliance on $X_1$ lowers utility, i.e., causes more mistakes.
\end{example}

\subsection{Comparing Reliance}
\label{sec:comparing-reliance}

As defined, we can already compare a decision-maker's reliance on a variable $X_1$ to their reliance on another variable $X_1'$ because the expected values are taken over the same distribution. However, two decision-makers impose distinct joint distributions over their choices and the covariates.
For example in section \ref{sec:supreme-court-interruptions}, which investigates interruptions by Supreme Court Justices during oral argument, each Justice is a separate decision-maker.

We can compute a normalized measure of reliance across multiple distributions by joining all of the decision-makers' distributions. 
In particular, consider $n$ decision-makers. For each $1 \leq i \leq n$, we have a choice $Y_i$, a partition of the covariates $(X_{1i}, X_{2i})$, and a joint distribution $P_i$ over choices and covariates. Furthermore, let $\mathcal{P}$ denote the (arbitrary) coupling of all the $P_i$'s. In the example of multiple Supreme Court Justices interrupting the same advocate, the $P_i$'s are not independent. 

\begin{definition}[Cross-Distribution Reliance]
    \label{def:cross-distribution-reliance}
    Fix some loss function $\mathcal{L} : (\mathcal{Y}_1 \times \cdots \times \mathcal{Y}_n)^2 \rightarrow \mathbb{R}$, and let $f_i : \mathcal{X}_{1i} \times \mathcal{X}_{2i} \rightarrow \mathcal{Y}_i$ be the prediction function for decision-maker $i$, as in definition \ref{def:oracle-prediction}.
    The cross-distribution reliance of decision-maker $k$ on $X_{1k}$ is
    $$r_k^\times = \E \mathcal{L}\left( ( Y_i^a )_{i=1}^n, \; (\, f_i(s_k(X_{1i}^a, X_{1i}^b), X_{2i}^a) \, )_{i=1}^n \right),$$
    where the shuffle function $s_k(X_{1i}^a, X_{1i}^b)$ equals $X_{1i}^b$ if $i = k$ and $X_{1i}^a$ if $i \ne k$.
    The expectation is over the independent draws $(Y_i^a, X_{1i}^a, X_{2i}^a)_{i=1}^n$ and $(Y_i^b, X_{1i}^b, X_{2i}^b)_{i=1}^n$ from $\mathcal{P}$.
\end{definition}

Observe that the cross-distribution reliance coincides with the original definition of reliance if $n = 1$. In effect, we have stacked all the decision-makers into one super-decision-maker. A special case is when $\mathcal{L}$ is additively separable with respect to $i$:

\begin{proposition}[Equivalent Ranking]
    \label{prop:equivalent-ranking}
    Suppose the cross-distribution loss function $\mathcal{L}$ is additively separable, i.e., 
    $$\mathcal{L}( (y_i)_{i=1}^n, (\hat{y}_i)_{i=1}^n ) = \sum_{i=1}^n L_i(y_i, \hat{y}_i),$$
    where $L_i : \mathcal{Y}_i \times \mathcal{Y}_i \rightarrow \mathbb{R}$ for each $i$.
    For each decision-maker $i$, define the baseline reliance:
    $$b_i = \E_{Y_i^a, X_{1i}^a, X_{2i}^a} L_i\left( Y_i^a,\; f_i(X_{1i}^a, X_{2i}^a) \right),$$
    where the expectation is taken with respect to $(Y_i^a, X_{1i}^a, X_{2i}^a) \sim P_i$ for each $i$, as indicated.
    Then, 
    $$r_j^\times < r_k^\times \Leftrightarrow r_j - b_j < r_k - b_k,$$
    where $r_k$ is the reliance on $X_k$ as in definition \ref{def:reliance}.
    That is, the ranking by $r_k - b_k$ is equivalent to the ranking by $r_k^\times$.
\end{proposition}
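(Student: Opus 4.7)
The plan is to expand $r_k^\times$ using additive separability and then read off each summand as either a reliance or a baseline reliance, so that a telescoping cancellation yields the claimed equivalence.

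First I would use the additive form of $\mathcal{L}$ and linearity of expectation to write
\[
r_k^\times \;=\; \sum_{i=1}^n \E\, L_i\!\left(Y_i^a,\; f_i(s_k(X_{1i}^a, X_{1i}^b), X_{2i}^a)\right).
\]
The key observation is that although the outer expectation is taken over the full coupling $\mathcal{P}$, each individual summand only involves coordinates with the $i$-th index, and $\mathcal{P}$ has $P_i$ as its $i$-th marginal on $(Y_i, X_{1i}, X_{2i})$ (with the $a$ and $b$ draws still independent copies). So each summand reduces to an expectation under $P_i$ alone.

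Next I would evaluate the two kinds of summand. For $i = k$, we have $s_k(X_{1k}^a, X_{1k}^b) = X_{1k}^b$, so the summand is $\E L_k(Y_k^a, f_k(X_{1k}^b, X_{2k}^a))$, which matches definition \ref{def:reliance} and equals $r_k$. For $i \neq k$, we have $s_k(X_{1i}^a, X_{1i}^b) = X_{1i}^a$, so the summand is $\E L_i(Y_i^a, f_i(X_{1i}^a, X_{2i}^a))$, which is exactly $b_i$. Collecting,
\[
r_k^\times \;=\; r_k \;+\; \sum_{i \neq k} b_i \;=\; (r_k - b_k) \;+\; \sum_{i=1}^n b_i.
\]

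Finally, since $\sum_i b_i$ does not depend on $k$, subtracting the analogous identity for $j$ gives $r_j^\times - r_k^\times = (r_j - b_j) - (r_k - b_k)$, from which the stated equivalence follows. There is no real obstacle here: the only thing worth checking carefully is the marginalization step — that the joint coupling $\mathcal{P}$ does not contaminate the individual summands — since the shuffled $X_{1i}^b$ for $i = k$ is independent of $(Y_k^a, X_{2k}^a)$ by construction, and for $i \neq k$ only the $a$-draw appears, so in both cases the summand depends only on the relevant marginal of $\mathcal{P}$.
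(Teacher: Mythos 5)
Your proof is correct and follows essentially the same route as the paper's: decompose $r_k^\times$ by additive separability into $r_k + \sum_{i \ne k} b_i$, then subtract the $k$-independent constant $\sum_i b_i$ to get the equivalence of rankings. Your explicit check that each summand depends only on the relevant marginal of $\mathcal{P}$ is a nice touch the paper leaves implicit.
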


\begin{proof}\hyperlink{pf:equivalent-ranking}{Proof in appendix.}\end{proof}

\begin{remark}
    Letting $L_i$ be the cross-entropy as in example \ref{ex:cross-entropy-loss}, the difference $r_k - b_k$ equals the KL divergence. Thus, we normalize the reliance measures by partialling out each distribution's intrinsic noise.
\end{remark}

Proposition \ref{prop:equivalent-ranking} is an important result because $r_k - b_k$ is much simpler to compute than $r_k^\times$. 
Furthermore, the equivalence rigorously justifies Fisher et al.'s normalization by the un-shuffled expected loss.
In section \ref{sec:supreme-court-interruptions}, we apply this proposition to determine which Justices rely the most on gender when interrupting advocates.

\subsection{Acting on Reliance Values}

Beyond providing insight into the decision-making process, reliance values are an actionable metric for a wide range of real-world problems. 

\begin{example}[Enforcing Conditional Statistical Parity]
    Consider for example the admissions decision setting, like in the lawsuit \citetalias{sffa}.
    We might assert that the admissions decision should not directly rely on race ($X_1$). 
    As proved in proposition \ref{prop:square-loss-r-vs-b}, we could test the equivalent condition, $r = b$.
    If we reject the hypothesis, then we would have evidence that the admissions decision relies on race. The benefit of this metric over something like a coefficient on the race variable is that this approach is agnostic to the specific modelling assumptions on the conditional expectation function; therefore, it can better accommodate flexible ML methods that excel in high-dimensional settings. 
\end{example}

\begin{example}[Preventing Manipulation]
    Still in the admissions setting, we might want to ensure a ranking among some subset of the covariates. For example, in order to prevent manipulation of the admissions process, we might require that the admissions decision is less sensitive to self-reported community service hours than to exam scores. In this case, we would test $r(\text{community service hours}) < r(\text{exam score})$. An advantage of this framework is that it enables us to compare any two variables, regardless of their units.
\end{example}

\begin{example}[Idealized Baselines]
    We might instead want to see how the admissions officer's behavior compares to an idealized decision rule that we cook up ourselves. For example, we might calibrate a simple admissions rule to our own preferences and generate a ranking of variables by their importance. We would then do the same for the observed admissions data and check for deviations between the two rankings. We could also apply proposition \ref{prop:equivalent-ranking} to directly verify that the admissions officer relies less on a sensitive variable like race than the idealized rule does.
\end{example}

\subsection{Alternative Formulations}

I briefly mention two alternatives to the formulation of reliance in definition \ref{def:reliance}. These alternatives share a similar motivation to the original definition, but their definition of `noise' differs.

First, in our motivation for reliance, we made $X_1$ noise by making it completely uninformative of $(Y, X_2)$. That is, we shuffled in $X_1^b \perp (Y^a, X_2^a)$. We might instead assert that $X_1$ adds no additional information on $Y$ given $X_2$. In other words, we might only need $X_1^b \perp Y^a \mid X_2^a$. This \textbf{conditional reliance} measure instead takes one draw $(Y^a, X_1^a, X_2^a) \sim P$ and picks $X_2^b \sim P_{X_2^b \mid X_1^b = X_2^a}$. 
\citet{Fisher} present this definition too in section 8.2.

Second, we might measure the \textbf{worst-case reliance}
$$r = \sup_{x_1^b \in \mathcal{X}_1} \E L( Y^a, f(x_1^b, X_2^a) )$$
where the expectation is over $(Y^a, X_2^a) \sim P_{Y,X_2}$. The max replaces $X_1$ with the noise that creates the largest loss.
Compared to these two alternatives, the one in definition \ref{def:reliance} is much easier to compute. Furthermore, the conditional reliance measure will suffer in small samples and high-dimensional settings since it is less likely to find multiple observations with the same $X_2$ to shuffle together. 

\subsection{Estimating Reliance}

If we can estimate counterfactual queries to the oracle, then we can define a plug-in estimator for $r$:

\begin{definition}[Estimator for $r$]
    \label{def:estimator-for-r}
    Given an estimator $\hat{f} : \mathcal{X}_1 \times \mathcal{X}_2 \rightarrow \mathcal{Y}$ for $f$, define the estimator
    $$\hat{r} = \frac{1}{n(n-1)}\sum_{i = 1}^n \sum_{j \ne i} L( y_i, \hat{f}(x_{1j}, x_{2i}) )$$
    over the data $\{ (y_i, x_{1i}, x_{2i}) \}_{i=1}^n$.
\end{definition}

\begin{remark}
    Assuming that the loss has finite variance, $\hat{r}$ has a normal limiting distribution with mean $\E[L(Y^a, \hat{f}(X_1^b, X_2^a))]$ by the central limit theorem. $\hat{r}$'s distribution depends on that of $\hat{f}$.
\end{remark}

The double sum can make this object computationally expensive: if evaluating $\hat{f}$ is $O(1)$, then directly computing this double sum is $O(n^2)$. However, if $X_1$ realizes few values, we can compute $\hat{r}$ much more efficiently:

\begin{proposition}
    \label{prop:efficient_r_hat_by_categories}
    Let $C = \{ x_{1i} \}_{i=1}^n$ be the distinct observed values that $X_1$ takes in the data. Then,
    $$\hat{r} = \frac{1}{n(n-1)}\sum_{i=1}^n \sum_{c \in C} (n_c - 1\{x_{1i} = c\})L(y_i, \hat{f}(c, x_{2i})),$$
    where $n_c = |\{ i : x_{1i} = c \}|$.
\end{proposition}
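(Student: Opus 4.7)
The plan is to reorganize the inner sum over $j\ne i$ by partitioning the indices according to the value of $x_{1j}$. Since $L(y_i, \hat{f}(x_{1j}, x_{2i}))$ depends on $j$ only through $x_{1j}$, all indices $j$ sharing a common value $x_{1j} = c$ contribute the same summand $L(y_i, \hat{f}(c, x_{2i}))$, and we can replace the sum over such $j$ by a multiplicity times that summand.

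More concretely, I would fix $i$ and, for each $c \in C$, write $A_c = \{j : x_{1j} = c\}$, so $|A_c| = n_c$ and $\{1,\dots,n\} = \bigsqcup_{c \in C} A_c$. The inner sum becomes
\begin{align*}
\sum_{j \ne i} L(y_i, \hat{f}(x_{1j}, x_{2i}))
&= \sum_{c \in C} \sum_{j \in A_c,\, j \ne i} L(y_i, \hat{f}(c, x_{2i})) \\
&= \sum_{c \in C} |A_c \setminus \{i\}|\, L(y_i, \hat{f}(c, x_{2i})).
\end{align*}
The key observation is that $|A_c \setminus \{i\}| = n_c - \mathbf{1}\{i \in A_c\} = n_c - \mathbf{1}\{x_{1i} = c\}$, because $i$ belongs to exactly one block $A_c$, namely the one indexed by $c = x_{1i}$.

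Substituting this identity into the definition of $\hat{r}$ from Definition~\ref{def:estimator-for-r} and dividing by $n(n-1)$ gives the claimed formula. The computational payoff is immediate: the double sum is now over $n \cdot |C|$ pairs rather than $n(n-1)$, so when $|C|$ is small (as with a low-cardinality $X_1$), this representation is strictly cheaper, and each call to $\hat{f}(c, x_{2i})$ is made exactly once per pair $(i,c)$ instead of being recomputed for every $j$ with $x_{1j} = c$.

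There is no real obstacle here — the statement is a bookkeeping identity that follows from partitioning indices by the value of $x_{1j}$ and pulling the value-dependent summand outside the inner sum. The only point that deserves a sentence of care is the correction term $\mathbf{1}\{x_{1i} = c\}$, which arises solely because the original sum excludes $j = i$; omitting this correction would double-count (or rather, mis-count by one) exactly in the block containing $i$.
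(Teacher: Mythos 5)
Your proof is correct and follows essentially the same route as the paper's: both partition the inner sum over $j \ne i$ into blocks by the value $x_{1j} = c$, replace the constant summand $L(y_i, \hat{f}(c, x_{2i}))$ by its multiplicity, and observe that the block containing $i$ loses exactly one element, giving the count $n_c - 1\{x_{1i} = c\}$. Nothing is missing.
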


\begin{proof}\hyperlink{pf:efficient_r_hat_by_categories}{Proof in appendix.}\end{proof}

\begin{remark}
    We can implement this second formulation of $\hat{r}$ in $O(n|C|)$ time complexity, assuming that evaluating $\hat{f}$ is $O(1)$.
    To do this, precompute $n_c$ for each $c \in C$ by looping through the $x_{1i}$'s and tracking counts in a hash table whose keys are $C$; this step is $O(n)$.
\end{remark}

\begin{remark}
    As a corollary, if $X_1$ is binary, i.e., $C = \{0, 1\}$, then computing $\hat{r}$ using the formula in proposition \ref{prop:efficient_r_hat_by_categories} is only $O(n)$. In fact, it is $O(n)$ as long as $X_1$ has finite support (though the hidden constant can be quite large).
\end{remark}

The oracle asks what the decision-maker would have done were he presented with a specific vector of covariates.
We rely on the oracle because we cannot directly query the decision-maker, but this counterfactual inference incurs the cost of additional assumptions and statistical uncertainty.
In fact, this object might only be partially identified:
black-box decision-makers often rely on private information, so $X_2$ may be only partially observed, or some data might be systematically missing.
If counterfactual queries are only partially identified, then we may need to settle for bounds on $r$:

\begin{proposition}
    \label{prop:conservative-bounds}
    Suppose $\mathcal{Y} = [0,1]$, but $y_i \in \{0,1\}$ for all $i$, and assume that $L(y, \hat{y})$ increases monotonically with $|y - \hat{y}|$.
    As a normalization, assume $L(y, \hat{y}) = 0$ if $|y - \hat{y}| = 0$.
    Then, given bounds $\hat{f}_{min}, \hat{f}_{max}$ on $\hat{f}$, we can obtain the conservative bounds
    $$\hat{r}_{min} = \frac{1}{n(n-1)} \sum_{i=1}^n \sum_{j \ne i} \min\{ L( y_i, \hat{f}_{min}(x_{1j}, x_{2i}) ), L( y_i, \hat{f}_{max}(x_{1j}, x_{2i}) ) \}$$
    and $\hat{r}_{max}$ by replacing $\min$ with $\max$. 
    That is, $\hat{r} \in [\hat{r}_{min}, \hat{r}_{max}]$.
\end{proposition}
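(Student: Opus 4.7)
The plan is to reduce the statement to a pointwise bound on each summand $L(y_i, \hat{f}(x_{1j}, x_{2i}))$ and then sum over the $n(n-1)$ terms in the definition of $\hat{r}$. Since both $\hat{r}_{min}$ and $\hat{r}_{max}$ are built by averaging pointwise extremes of the same shape, the averaging step is immediate and the real content lies in the pointwise claim.

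The key observation I would emphasize is that the hypothesis $y_i \in \{0,1\}$ (as opposed to $y_i \in [0,1]$) makes the map $\hat{y} \mapsto |y_i - \hat{y}|$ monotone on $[0,1]$: it equals $\hat{y}$ when $y_i = 0$ and $1 - \hat{y}$ when $y_i = 1$. This is the only place the binary-outcome hypothesis is used, and it is essential --- for $y_i = 1/2$, the absolute distance is non-monotone in $\hat{y}$ and the bound fails. So the first step is a two-line case split on $y_i$ showing that for any $\hat{y} \in [\hat{f}_{min}(x_{1j}, x_{2i}), \hat{f}_{max}(x_{1j}, x_{2i})]$,
\[
|y_i - \hat{y}| \in \bigl[\min\{|y_i - \hat{f}_{min}(x_{1j}, x_{2i})|,\, |y_i - \hat{f}_{max}(x_{1j}, x_{2i})|\},\; \max\{\cdots\}\bigr].
\]

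Next I would invoke the monotonicity of $L(y, \hat{y})$ in $|y - \hat{y}|$ to transport the bound through $L$. Because $L$ is monotone increasing in $|y - \hat{y}|$, the min of the two endpoint distances produces the min of the two endpoint losses (similarly for max), so
\[
\min\{L(y_i, \hat{f}_{min}(x_{1j}, x_{2i})), L(y_i, \hat{f}_{max}(x_{1j}, x_{2i}))\} \le L(y_i, \hat{f}(x_{1j}, x_{2i})) \le \max\{\cdots\}.
\]
The normalization $L(y,\hat{y}) = 0$ when $|y - \hat{y}| = 0$ is not strictly needed for the inequality but ensures the bounds are sign-consistent.

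Finally I would sum the inequality over $i$ and $j \neq i$, divide by $n(n-1)$, and read off $\hat{r}_{min} \le \hat{r} \le \hat{r}_{max}$ by comparing with the definitions given in the statement. There is no serious obstacle; the only substantive step is the monotonicity observation, and the only subtle point worth flagging in a remark is why $y_i \in \{0,1\}$ (rather than $[0,1]$) cannot be weakened.
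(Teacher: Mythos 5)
Your proof is correct and takes essentially the same approach as the paper: a pointwise sandwich of each summand $L(y_i, \hat{f}(x_{1j},x_{2i}))$ between the min and max of the two endpoint losses, justified by $y_i \in \{0,1\}$ and monotonicity of $L$ in $|y-\hat{y}|$, followed by averaging. Your phrasing via monotonicity of $\hat{y} \mapsto |y_i - \hat{y}|$ on $[0,1]$ is marginally cleaner than the paper's case split on whether $y_i$ lies inside $[\hat{f}_{min}(x), \hat{f}_{max}(x)]$, but the substance is identical.
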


\begin{proof}\hyperlink{pf:conservative-bounds}{Proof in appendix.}\end{proof}

\section{Example: School Admissions}
\label{sec:school-admissions}

\subsection{Setup}

This worked example shows how this framework could be applied end-to-end:
an admissions officer decides if to admit a student depending on his or her race, sex, and test score. 
The choice $Y \mid X$ is characterized by the decision rule
$$Y = 1\{ -2X_1 + X_2 + X_3/5 - 2.2 \geq 0 \}$$
where $X_1 \sim Bernoulli(.5)$ and $X_2 \sim Bernoulli(.3)$. Also, we define $X_3 = A + E$, rounded to the nearest integer between 0 and 10, where innate ability $A \sim N(6,1)$ and study effort $E \sim N(1,1)$.

Students self-report if they were admitted. However, they only respond to the survey if $Z=1 \mid X,E$, where
$$Z = 1\{ X_1 + 3X_2 + X_3/8 + E - 3.5 \geq 0 \}.$$
The researchers want to estimate the admissions officer's reliance on $X_1,X_2$ and $X_3$. Importantly, the researchers know the survey response rate ($E[Z]$) and assume $Y \mid Z, X_1, X_2, X_3$ is roughly logistically distributed; however, they do not make distributional assumptions about the noise, nor do they know that, in fact, $Y \perp Z \mid X_1, X_2, X_3$.  

\subsection{Identification}

To compute these reliance values, we must estimate $P(Y=1 \mid X)$, but, given the researchers' knowledge, there is confounding in the missing data problem due to $E$. Therefore, this estimand is only partially identified.
We only observe $Y,X$ when $Z=1$ and we have $E[Z]$, so our identification result is
\begin{equation}
    \label{eq:school-admissions-identification}
    E[Y \mid X] \in [ E[Y \mid X,Z=1]P(Z=1), \; E[Y \mid X,Z=1]P(Z=1) + P(Z=0) ].
\end{equation}
The interval is smaller when $P(Z=1)$ is larger, and the interval is just a point when $P(Z=1) = 1$.

\begin{proposition}
    \label{prop:school-admissions-tight-identification}
    The identification region in equation \ref{eq:school-admissions-identification} is tight.
\end{proposition}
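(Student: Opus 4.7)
The plan is to prove validity of the bounds via a law-of-total-expectation decomposition, and then prove sharpness by exhibiting, for every value in the interval, a data-generating process consistent with the observables that attains that value.

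For validity, interpret $P(Z=\cdot)$ in the equation as $P(Z=\cdot \mid X)$ (the natural reading, since the entire display is conditional on $X$) and apply the law of total expectation:
\begin{equation*}
E[Y \mid X] = E[Y \mid X, Z=1]\, P(Z=1 \mid X) + E[Y \mid X, Z=0]\, P(Z=0 \mid X).
\end{equation*}
Since $Y \in \{0,1\}$, the unobserved quantity $E[Y \mid X, Z=0]$ lies in $[0,1]$; substituting $0$ and $1$ recovers the two endpoints of the interval in equation (\ref{eq:school-admissions-identification}), showing that every value of $E[Y \mid X]$ compatible with the observables lies in the stated interval.

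For sharpness, fix any target $v$ in the interval and define
\begin{equation*}
q_v \;:=\; \frac{v \,-\, E[Y \mid X, Z=1]\, P(Z=1 \mid X)}{P(Z=0 \mid X)},
\end{equation*}
which lies in $[0,1]$ precisely when $v$ lies in the identification interval. Construct a candidate joint distribution on $(Y, X, Z)$ by holding fixed the observed pieces — the joint law of $(Y, X)$ restricted to $\{Z=1\}$ and the marginal of $Z$ — and, on the unobserved event $\{Z=0\}$, setting $P(Y=1 \mid X, Z=0) := q_v$ together with any conformable choice of $P(X \mid Z=0)$. Plugging into the decomposition from the previous paragraph gives $E[Y \mid X] = v$ under this candidate DGP.

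The main subtlety is to argue that no further constraint imposed by the observables excludes some value in the interval. Because the researchers observe nothing about the joint law of $(Y, X)$ on $\{Z=0\}$ beyond $P(Z)$, the conditional $P(Y=1 \mid X, Z=0)$ is genuinely a free parameter, and the map $v \mapsto q_v$ is affine, so each $v$ in the interval corresponds to a valid $q_v \in [0,1]$ and hence to a bona fide probability measure compatible with every observable. Thus no value in the interval can be ruled out by the observables alone, establishing tightness.
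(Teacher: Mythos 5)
Your proof is correct, but it takes a genuinely different route from the paper's. The paper proves tightness by exhibiting two explicit data-generating processes --- $Y = 1\{X_1 + X_2 \geq 1\}$ with $Z = X_2$, and $Y = 1\{X_1 - X_2 \geq 0\}$ with $Z = X_2$ --- under which the unobserved conditional mean $E[Y \mid X, Z=0]$ is identically $0$ or identically $1$, so that each endpoint of the interval is attained exactly. You instead treat $E[Y \mid X, Z=0]$ as a free parameter in $[0,1]$ that no observable constrains, and sweep it to hit every value in the interval. Your argument is stronger in that it establishes sharpness of the entire identified set (every interior point is attainable, not just the endpoints) and makes explicit \emph{why} tightness holds --- the observables say nothing about the law of $(Y,X)$ on $\{Z=0\}$ --- whereas the paper's construction is more concrete and verifiable by direct computation but only certifies the two endpoints. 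One point worth flagging in both directions: you read $P(Z=1)$ in equation \ref{eq:school-admissions-identification} as $P(Z=1 \mid X)$, which is what the law-of-total-expectation decomposition actually requires; the paper writes the unconditional probability (matching what the researchers are said to know, namely $E[Z]$), and its two examples sidestep the discrepancy because $Z \perp X_1$ there. Under the paper's literal unconditional reading, your validity step does not go through without an additional argument, so your choice to make the conditional reading explicit is the right one, but it is a departure from the stated setup rather than a pure restatement of it.
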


\begin{proof}\hyperlink{pf:school-admissions-tight-identification}{Proof in appendix.}\end{proof}

\subsection{Estimation}

In this simulation, $10,000$ students apply. 8,313 students respond to the survey, and their acceptance rate is 13\%.

Estimate $f_1(x) = E[Y \mid X=x,Z=1]$ by logistic regression of the observed $Y$ on $X$; denote this estimator by $\hat{f}_1$.
Even though $Y \mid X,Z=1$ isn't necessarily logistic, over 99.8\% of the sample is classified correctly, and this approximation will suffice for the sake of this example.
We can now use our identification result in equation \ref{eq:school-admissions-identification} to define lower and upper bounds for $f(x) = E[Y \mid X=x]$:
\begin{align*}
    f_{min}(x) &= f_1(x)P(Z=1) \quad\text{and} \\
    f_{max}(x) &= f_1(x)P(Z=1) + P(Z=0).
\end{align*}
Then, estimate $\hat{f}_{min}$ and $\hat{f}_{max}$ by plugging in $\hat{f}_1$ for $f_1$.

\subsection{Compute Reliance}

For measuring the reliance on $X_k$, we plug these estimators into the result from proposition \ref{prop:conservative-bounds} to obtain the functions:
\begin{align*}
    \hat{r}_{min}^1(k) &= \frac{1}{n(n-1)} \sum_{i=1}^n \sum_{j \ne i} \min\{ L( y_i, \hat{f}_{min}(s(i,j,k)) ), \,\, L( y_i, \hat{f}_{max}(s(i,j,k)) ) \} \\
    \hat{r}_{max}^1(k) &= \frac{1}{n(n-1)} \sum_{i=1}^n \sum_{j \ne i} \max\{ L( y_i, \hat{f}_{min}(s(i,j,k)) ), \,\, L( y_i, \hat{f}_{max}(s(i,j,k)) ) \}
\end{align*}
where the shuffling function $s(i,j,k)$ returns row $i$ with $x_{jk}$ shuffled into the $k$th slot. 
However, these bounds are not bounds on $r$. For example, for $k = 1$, the observed data are $X_1 \mid Z=1$ and $(Y,X_2,X_3) \mid Z=1$, hence the superscript 1 on the variable names. 
That is, $[\hat{r}_{min}^1(1), \hat{r}_{max}^1(1)]$ bounds how much the admissions officer depends on race for students who respond to the survey.

\begin{figure}[htbp]
    \centering
    \begin{minipage}{\textwidth}
        \begin{adjustbox}{width=\textwidth,center}
            \import{pdftex/}{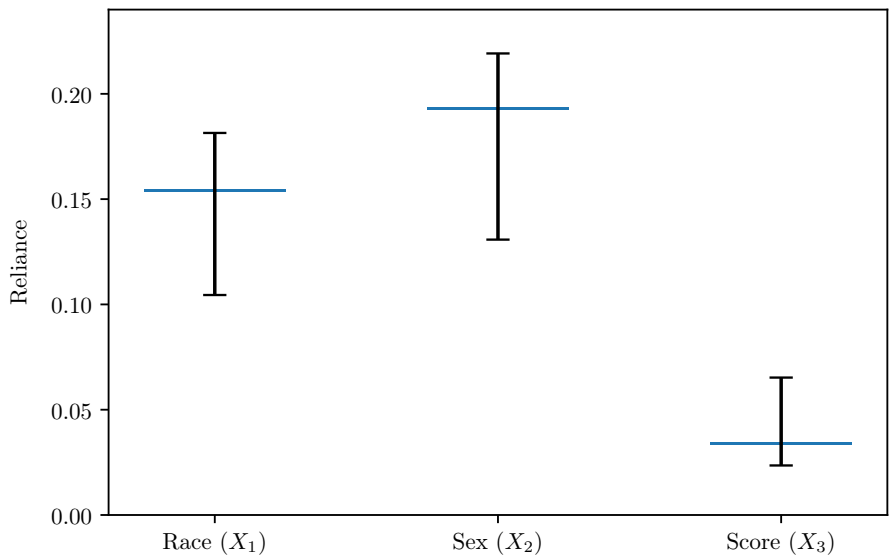}
        \end{adjustbox}
        {\footnotesize Note: Blue lines indicate true reliance values.}
    \end{minipage}
    \caption{Reliance for Responding Students}
    \label{fig:mr_z1}
\end{figure}

Using the square loss $L(y, \hat{y}) = (y - \hat{y})^2$, we obtain the bands in figure \ref{fig:mr_z1}.
Observe that the bands for race and sex are completely above the band for test score, but the bands for race and sex overlap.
Thus, we might conclude that, for students who ultimately respond to the survey, the admissions officer relies more on race and sex than on test scores. But, it is inconclusive if she relies more on race or on sex.
That is, for an average student who responds to the survey, replacing race or sex with noise changes the admissions decision more than replacing the test score with noise. If we believe, however, that test score should be the most important variable for this (self-selecting) cohort, then we might call for a closer look at or revision of admissions practices.

Note that the ranking of reliance values differs from that of the coefficients in the admissions decision rule, $Y = 1\{ -2X_1 + X_2 + X_3/5 - 2.2 \geq 0 \}$. Here, $X_1$ has coefficient $-2$, which is larger in absolute value than the coefficient of $1$ on $X_2$, but the reliance on $X_2$ is higher. Furthermore, we cannot compare the coefficients on race or sex to the coefficient on test score; we can, however, directly compare the reliance on them.

\section{Example: Interruptions During Supreme Court Oral Argument}
\label{sec:supreme-court-interruptions}

\subsection{Introduction}

Using Supreme Court oral argument transcripts since 1982 \citep{Chang-ConvoKit, Danescu-Niculescu-Mizil+al:12a}, \citet{cai_et_al_2023} measure gender's effect on how often Justices interrupt advocates. 
An oral argument is comprised of a sequence of utterances, each with one speaker. The authors extract \textit{chunks} from each oral argument. A chunk is a contiguous dialogue of four or more utterances between exactly one advocate and one Justice. For example, one chunk from \textit{Comcast Corp. v. National Association of African American-Owned Media} is:

\begin{quote}
    \tt
    Erwin Chemerinsky:
    If at the end the plaintiff concedes that he or she would have never gotten the contract anyway, I believe, at the end, under the standard adopted in Patterson versus McLean, the plaintiff would not prevail.
    
    Justice John G. Roberts Jr.:
    So that the --
    
    Erwin Chemerinsky:
    But that doesn't --
    
    Justice John G. Roberts Jr.:
    I'm sorry.
    Go ahead.
    
    Erwin Chemerinsky:
    I was going to say but that doesn't tell us what's required at the pleading stage or at the prima facie case stage.
\end{quote}

This chunk has 5 utterances, Erwin Chemerinsky is the advocate, and Justice Roberts is the speaker. 
As Cai et al. note, these transcripts are manually typed and consistently formatted, and interruptions are indicated with either two dashes (as in this chunk) or two dots at the end of an utterance. 
In this chunk, the advocate says 62 words (\textit{advocate tokens}) and is interrupted by Justice Roberts twice. In the entire dataset of 677,294 chunks, there is a mean of 59 tokens per chunk, and the median is 28. 

Cai et al. seek to estimate the effect of gender $G_i$ on the \textit{token-normalized interruption rate} $Y_{i \mid j}$ of chunk $i$ with Justice $j$:
\begin{equation}
    \label{eq:interruption-rate}
    Y_{i \mid j} = \frac{\text{number of interruptions by Justice $j$ in chunk $i$}}{(\text{number of advocate tokens in chunk $i$}) / 1000},
\end{equation}
or the number of times the Justice would interrupt the advocate if the advocate spoke 1,000 words, which is about 4 pages of 12-point font, double-spaced text.
The Justices' median interruption rates range from Justice Blackmun's 2.2 interruptions per thousand tokens to Justice Breyer's 11.0. The median interruption rate overall is 6.8, and the mean is 10.7. 

The authors assume that the interruption rates of all chunks, i.e., the observations, are independent and that ``there is no unmeasured confounding.'' 
They consider but ultimately decline to control for the \textit{ideological alignment} between the Justice and the advocate's argument, the advocate's \textit{stylistic quality}, and the advocate's \textit{experience}.
Reformulated in Rubin's potential outcomes framework, $(Y_{i \mid j}^{(0)}, Y_{i \mid j}^{(1)}) \perp G_i$, where $Y_{i \mid j}^{(g)}$ is the interruption rate of chunk $i$ by Justice $j$ if the advocate's gender in that chunk were $g$.
Given these assumptions, the authors compute the difference between the mean male and female interruption rates for each Justice and conclude that there is a ``clear and consistent gender effect.''

My analysis contributes new results: First, recognizing the presence of outliers and heteroskedasticity and controlling for argumentative alignment and advocate experience, I find that the effect of gender is more muted by estimating a robust liner regression. Second, I compute cross-distribution reliance values for these three covariates, as in definition \ref{def:cross-distribution-reliance}, allowing us to compare the importance of otherwise incommensurable covariates and illustrating the value of this paper's framework.

\subsection{Estimate the Effect of Gender}

Like Cai et al., I divide each oral argument since 1982 into chunks of four or more utterances with one Justice and one advocate, and I compute each chunk's token-normalized interruption rate as in equation \ref{eq:interruption-rate}. 

Like them, I determine each advocate's gender by checking the honorific with which the Justices address that advocate in the oral argument. For example, for Erwin Chemerinsky, I check if the Justices ever say ``Mr. Chemerinsky'' or ``Ms. Chemerinsky.'' If neither honorific matches, for example because the speaker is addressed as ``General,'' I use the confident classifications of an open-source gender guesser \citep{gender-guesser}. I manually resolve a few names that match both honorifics; for example in \textit{Pierce v. Underwood}, Justice Rhenquist accidentally calls advocate Mary Burdick ``Mr. Burdick.'' I drop any advocates without a matched gender. The full pipeline is available online \citep{vebman-scorpus}.

I define \textit{experience} as the number of oral arguments since 1982 in which a particular advocate appears. 
I also define \textit{alignment} as whether or not the Justice ultimately votes for that advocate's side. Judges, clerks, and scholars doubt that oral argument actually changes decisions in all but the closest cases \citep{Wolfson, Duvall, Coleman}, mitigating the possibility that argumentative quality affects both interruptions and the Justice's ultimate decision. Even though alignment is weakly correlated with gender (--0.00375), it may still be relevant because Justices use oral arguments to refine their opinions.
Ultimately, the identification assumption is
$$(Y_{i \mid j}^{(0)}, Y_{i \mid j}^{(1)}) \perp \textit{Gender}_i \mid \textit{Experience}_i, \textit{Alignment}_i.$$

\begin{figure}[htbp]
    \centering
    \begin{adjustbox}{width=1.5\textwidth,center}
        \import{pdftex/}{interruptions_per_chunks_svg-tex.pdf_tex}
    \end{adjustbox}
    \begin{minipage}{\textwidth}
        {\footnotesize Note: Each point represents a distinct advocate.
        An advocate's mean interruption rate is the average of the interruption rates over all chunks with that advocate.}
    \end{minipage}
    \caption{Advocate Mean Interruption Rate vs. Number of Chunks}
    \label{fig:scorpus-interruptions-per-chunks}
\end{figure}

I use a robust regression because the data are heteroskedastic and contain outliers. A White test for heteroskedasticity rejects the null hypothesis at the 5\% level for 6 out of 21 Justices. 
Furthermore, figure \ref{fig:scorpus-interruptions-per-chunks}, which shows each advocate's mean interruption rate over the number of chunks with that advocate, helps reveal that there are outliers. There is significantly more variance among advocates who appear in fewer chunks, and there are clear outliers among them. For example, Lisa Corkran is the advocate with the highest mean interruption rate of 81 interruptions per thousand tokens; however, she appears in only 16 chunks, including one with Justice Breyer and an interruption rate of 200 and two with Justice Roberts and an interruption rate of 500. 

\begin{figure}[htbp]
    \centering
    \begin{adjustbox}{width=1.5\textwidth,center}
        \import{pdftex/}{ate_svg-tex.pdf_tex}
    \end{adjustbox}
    \begin{minipage}{\textwidth}
        {\footnotesize Note: Huber-estimator robust regression coefficient in blue and non-robust (vanilla OLS) in gray. Both regressions control for alignment and experience.
        95\% confidence intervals shown.
        Observed difference in means in red.}
    \end{minipage}
    \caption{Average Treatment Effect of Being Female}
    \label{fig:scorpus-ate}
\end{figure}

Figure \ref{fig:scorpus-ate} displays the results. For each Justice, we estimate a robust regression of each chunk's interruption rate on the advocate's gender, experience, and argumentative alignment.
The average treatment effect of gender computed by the robust regression is in blue, and that computed with vanilla OLS is in gray. The observed difference between mean male and female interruption rates is in red. 
The observed difference is often closer to 0 compared to the vanilla OLS ATE, and the robust estimates are consistently closer to 0 than both. Notably, the sign of Justice Marshall's estimate flips when using the robust estimate. 

Overall, 10 out of 21 Justices have robust confidence intervals that are entirely greater than 0. That is, for 10 Justices, being a woman increases the number of interruptions at the 5\% significance level. For Justice Powell only, being a woman decreases the number of interruptions at the 5\% significance level. Although statistically significant, the magnitudes are modest: the robust ATE of gender is less than 1 interruption per thousand tokens in magnitude for 12 Justices. For 19 Justices, being female adds (or, in Kavanaugh and Powell's case, subtracts) fewer than 2 interruptions per thousand tokens.

\subsection{Compute Reliance Values}

We now directly apply this paper's definitions and results to compute each Justice's reliance on gender, experience, and alignment.
We use the square loss $L(y, \hat{y}) = (y - \hat{y})^2$ and let $\hat{f}_j$ be the prediction function of the Huber estimator \citep{huber-robust-rlm} for Justice $j$ used to compute the robust ATE above.
We compute the within-distribution reliance for each Justice as in proposition \ref{prop:efficient_r_hat_by_categories}, which provides massive performance improvements over the formulation in definition \ref{def:estimator-for-r}.
To allow comparisons across Justices, we must normalize the within-distribution reliance values. If we define an additively separable total loss function over the $J$ Justices
$$\mathcal{L}( (y_j)_{j=1}^J, (\hat{y}_j)_{j=1}^J ) = \sum_{j=1}^J L(y_j, \hat{y}_j)$$
then we can apply proposition \ref{prop:equivalent-ranking} to obtain cross-distribution rankings of reliance values by subtracting off an estimate of the baseline reliance, 
$$\hat{b}_j = \frac{\sum_{\text{chunk } i \text{ with Justice } j} L(y_{i \mid j}, \hat{f}(\textit{Gender}_i, \textit{Experience}_i, \textit{Aignment}_i))}{\text{number of chunks with Justice $j$}}.$$

\begin{figure}[htbp]
    \centering
    \begin{adjustbox}{width=1.5\textwidth,center}
        \import{pdftex/}{reliance_svg-tex.pdf_tex}
    \end{adjustbox}
    \begin{minipage}{\textwidth}
        {\footnotesize Note: Reliance on gender (red), alignment (green), and experience (blue). Justices in the same horizontal panel rely on these variables in the same order. Colored lines connect points for ease of viewing.}
    \end{minipage}
    \caption{Reliance on Gender, Alignment, and Experience}
    \label{fig:scorpus-reliance}
\end{figure}

The results are summarized in figure \ref{fig:scorpus-reliance}; the raw reliance values and Huber estimate coefficients are also in table \ref{table:reliance-and-coefficients} in the appendix.
In example \ref{ex:square-loss}, we showed that $r \geq b \geq 0$, and hence our reliance values, $r - b$, must be non-negative. The presence of negative estimated reliance indicates that $\hat{f}$ produces errors, which propagate through $\hat{r}$ and $\hat{b}$. 

The Huber-ATE and reliance (both recorded in table \ref{table:reliance-and-coefficients}) provide competing measures of how much gender and alignment matter to each Justice:
for 10 out of 21 Justices, the ATE of gender is greater than the ATE of alignment.
These same 10 Justices, in addition to Kagan, Marshall, and Ginsburg, rely more on gender than on alignment (these are the top 3 panels in figure \ref{fig:scorpus-reliance}). That being said, the point estimates of reliance on gender and alignment are very close for many Justices.

We cannot directly compare the coefficients on gender and alignment to the coefficient on experience. Using reliance, however, we can directly compare the importance of any variables:
\begin{itemize}
    \item 11 Justices rely the most on gender, 7 rely the most on alignment, and 3 rely the most on experience.
    \item 13 Justices rely more on gender than on experience when `deciding' if to interrupt an advocate, but 8 Justices rely more on experience. 
    \item Justices Brennan, Roberts, and Scalia rely the most on gender among all the Justices. Justices Burger, Sotomayor, and Marshall rely the least. (Thomas's estimate is unreliable because he very rarely interrupts.)
    \item Of the Justices currently on the Court, only Roberts (and Thomas) rely the most on gender. Alito, Gorsuch, and Sotomayor rely the least on gender among all the Justices; instead, they all rely the most on alignment. 
    Kagan and Kavanaugh rely the most on experience.
    \item Sotomayor relies more on alignment than Roberts relies on gender, even though he relies the most on gender among any Justice on the Court (besides Thomas).
\end{itemize}

\section{Conclusion}

This paper expands and expounds on the permutation-based measure of variable importance, inspired by \citet{Fisher} and earlier used by \citet{Breiman, Breiman_Manual} for random forests.
For economics and the social sciences, the approach discussed in this paper introduces a flexible and interpretable framework to quantify how much black-box decision-makers rely on variables of interest. I discuss some of the considerations in implementing such a framework, rigorously connect it to the machine learning fairness literature, explain how reliance computations can have policy implications, and present illustrative and applied examples.

This work also contributes to the machine learning explainability literature by incorporating counterfactual estimation, including partial identification of counterfactual queries, from economics. Partial identification is particularly new. Settling for bounds instead of point-estimates can allow for more credible assumptions about how black boxes operate under the hood. As machine learning models become more broadly deployed, including in proprietary contexts, auditors' access to these models will decrease. With credible assumptions, it might be possible to estimate counterfactual queries when directly querying the black box is impossible. 

This paper equips the analyst with a new, flexible, and intuitive method for understanding how decision-makers think. By helping us explain opaque decisions, it has the potential to ensure fairness, confirm priorities, and improve outcomes across a huge array of contexts.

\section*{Appendix}
\label{sec:appendix}

\begin{proof}[Proof of proposition \ref{prop:square-loss-r-vs-b}]
    \hypertarget{pf:square-loss-r-vs-b}{}
    \textbf{First, prove $r \geq b$ and 1 $\Leftrightarrow$ 2.}
    We prove the claim by reformulating $b$ as the solution to the optimization of mean squared prediction error (MSPE) over the vector $(Y^a, X_1^a, X_1^b, X_2^a)$, which is the coupling where $(Y^a, X_1^a, X_2^a) \sim P$ is independent of $X_1^b \sim P_{X_1}$:
    \begin{align*}
        r 
        &= \E_{Y^a, X_1^b, X_2^a} (Y^a - f(X_1^b, X_2^a))^2 \\
        &= \E_{Y^a, X_1^a, X_1^b, X_2^a} (Y^a - f(X_1^b, X_2^a))^2 \\
        &\geq \min_{\substack{g : \mathcal{X}_1^2 \times \mathcal{X}_2 \rightarrow \mathbb{R} \\ \text{measurable}}} \E_{Y^a, X_1^a, X_1^b, X_2^a} (Y^a - g(X_1^a, X_1^b, X_2^a))^2.
    \end{align*} 
    The conditional expectation minimizes MSPE, so this minimum is attained by
    \begin{align*}
        g(x_1^a, x_1^b, x_2^a) 
        &= \E[ Y^a \mid X_1^a = x_1^a, X_1^b = x_1^b, X_2^a = x_2^a ] \\
        &= \E[ Y^a \mid X_1^a = x_1^a, X_2^a = x_2^a ] \\
        &= \E[ Y \mid X_1 = x_1^a, X_2 = x_2^a ] \\
        &= f(x_1^a, x_2^a).
    \end{align*}
    The second equality holds because $(Y^a, X_1^a, X_2^a) \perp X_1^b$. 
    The third equality holds because $(Y, X_1, X_2)$ and $(Y^a, X_1^a, X_2^a)$ are both distributed according to $P$.
    Thus:
    $$r \geq \E_{Y^a, X_1^a, X_1^b, X_2^a} (Y^a - f(X_1^a, X_2^a))^2 = b.$$
    The second claim follows because the minimizer of MSPE is unique almost surely.

    \textbf{Second, prove 2 $\Leftrightarrow$ 3.}
    We will prove the forward implication \textit{(only if)} by the contrapositive:
    Suppose $f(X_1^a, X_2^a) = f(X_1^b, X_2^a)$ does not hold almost surely.
    Therefore, there exists $U \subseteq \mathcal{X}_1^2 \times \mathcal{X}_2$ such that $P_{X_1^a,X_1^b,X_2^a}(U) > 0$ and $f(x_1^a, x_2^a) \ne f(x_1^b, x_2^a)$ for all $(x_1^a, x_1^b, x_2^a) \in U$.
    Decompose $U$ into two potentially overlapping subsets 
    $$U^a = \{ (x_1^a,x_1^b,x_2^a) \in U : f(x_1^a,x_2^a) \ne f(x_2^a) \}$$
    and
    $$U^b = \{ (x_1^a,x_1^b,x_2^a) \in U : f(x_1^b,x_2^a) \ne f(x_2^a) \}.$$
    Observe that for $u = (x_1^a,x_1^b,x_2^a) \in U$, if $u \notin U^a \cup U^b$ then 
    $$f(x_1^a,x_2^a) = f(x_2^a) = f(x_1^b,x_2^a),$$ 
    contradicting that $u \in U$. Thus, $U = U^a \cup U^b$. 
    Therefore, $P_{X_1^a,X_1^b,X_2^a}(U) > 0$ implies $P_{X_1^a,X_1^b,X_2^a}(U^a) > 0$ or $P_{X_1^a,X_1^b,X_2^a}(U^b) > 0$ (or both). 
    \begin{enumerate}[a.]
        \item 
            If $P_{X_1^a,X_1^b,X_2^a}(U^a) > 0$, 
            then define $V = \{ (x_1^a,x_2^a) : (x_1^a,x_1^b,x_2^a) \in U \}$
            by dropping $x_1^b$ from the vector.
            We thus have $P_{X_1^a,X_2^a}(V) > 0$, so $f(X_1^a, X_2^a) = f(X_2^a)$ does not hold almost surely.

        \item 
            If $P_{X_1^a,X_1^b,X_2^a}(U^b) > 0$, 
            then define $V = \{ (x_1^b,x_2^a) : (x_1^a,x_1^b,x_2^a) \in U \}$
            by dropping $x_1^a$ from the vector.
            We thus have $P_{X_1^b,X_2^a}(V) > 0$, which, by absolute continuity, implies $P_{X_1^a,X_2^a}(V) > 0$.
            Hence, $f(X_1^a, X_2^a) = f(X_2^a)$ does not hold almost surely.
    \end{enumerate}
    Thus, we have proved the forward direction by the contrapositive.
    
    Now, prove the reverse implication \textit{(if)} also by the contrapositive:
    Suppose that $f(X_1^a, X_2^a) = f(X_2^a)$ does not hold almost surely.
    Therefore, there exists a set $A \subseteq \mathcal{X}_1 \times \mathcal{X}_2$ such that $P_{X_1^a,X_2^a}(A) > 0$ and for all $(x_1^a, x_2^a) \in A$,
    \begin{align*}
        f(x_1^a, x_2^a) &\ne f(x_2^a) \\
        &= \E[Y \mid X_2 = x_2^a] \\
        &= \E_{X_1} [\E[Y \mid X_1, X_2 = x_2^a] \mid X_2 = x_2^a] \\
        &= \E_{X_1}[f(X_1, x_2^a) \mid X_2 = x_2^a].
    \end{align*}
    The second equality holds by the law of iterated expectation.
    Furthermore,
    $$B(x_1^a, x_2^a) = \{ x_1^b \in \mathcal{X}_1 : f(x_1^a, x_2^a) \ne f(x_1^b, x_2^a) \}$$
    satisfies 
    $$P_{X_1 \mid X_2 = x_2^a}(B(x_1^a, x_2^a)) > 0,$$
    otherwise the inequality would fail for a given $(x_1^a, x_2^a)$ pair. 
    Therefore, each $B(x_1^a, x_2^a)$ also has positive probability with respect to the unconditioned distribution $P_{X_1}$; that is, $P_{X_1}(B(x_1^a, x_2^a)) > 0$. Since $P_{X_1^b} = P_{X_1}$, we have 
    $$P_{X_1^b}(B(x_1^a, x_2^a)) > 0$$
    for all $(x_1^a, x_2^a) \in S$.
    Now, combine $A$ and each $B(\cdot, \cdot)$ to produce the set
    $$U = \{ (x_1^a, x_1^b, x_2^a) : (x_1^a, x_2^a) \in A \wedge x_1^b \in B(x_1^a, x_2^a) \}.$$
    Then,
    $$P_{X_1^a,X_1^b,X_2^a}(U) = P_{X_1^a,X_2^a}(A) P_{X_1^b}\left(\bigcup_{(x_1^a, x_2^a) \in A} B(x_1^a, x_2^a)\right)$$
    because $(X_1^a,X_2^a) \perp X_1^b$. The first term has positive probability by assumption. The second term also has positive probability because the union is nonempty by assumption and each $B(x_1^a, x_2^a)$ has positive probability.

    There thus exists a set $U \subseteq \mathcal{X}_1^2 \times \mathcal{X}_2$ such that $P_{X_1^a,X_1^b,X_2^a}(U) > 0$ and $f(x_1^a, x_2^a) \ne f(x_1^b, x_2^a)$ for all $(x_1^a, x_1^b, x_2^a) \in U$.
    Hence, $f(X_1^a, X_2^a) = f(X_1^b, X_2^a)$ does not hold almost surely, and we have proved the second implication by the contrapositive.
\end{proof}

\begin{proof}[Proof of Proposition \ref{prop:equivalent-ranking}]
    \hypertarget{pf:equivalent-ranking}{}
    Because $L$ is additively separable and the expected value commutes with addition, we can decompose cross-distribution reliance $r_k^\times$ into
    \begin{align*}
        r_k^\times 
        &= \sum_{i=1}^n \E L_i(Y_i^a, f_i(s_k(X_{1i}^a, X_{1i}^b), X_{2i}^a)) \\
        &= \E L_k(Y_k^a, f_k(X_{1k}^b), X_{2k}^a) + \sum_{\substack{i = 1 \\ i \ne k}}^n \E L_i(Y_i^a, f_i(X_{1i}^a, X_{2i}^a)) \\
        &= r_k + \sum_{\substack{i = 1 \\ i \ne k}}^n b_i.
    \end{align*}
    Subtracting $\sum_{i=1}^n b_i$ from both sides gives
    $$r_k^\times - \sum_{i=1}^n b_i = r_k - b_k$$
    Thus, $r_k - b_k$ gives the same ranking as $r_k^\times - \sum_{i=1}^n b_i$. This ranking shifts all $r_k^\times$'s by the same constant $\sum_{i=1}^n b_i$, and hence gives the same ranking as $r_k^\times$.
\end{proof}

\begin{proof}[Proof of Proposition \ref{prop:efficient_r_hat_by_categories}]
    \hypertarget{pf:efficient_r_hat_by_categories}{}
    Taking the double sum from definition \ref{def:estimator-for-r}:
    \begin{align*}
        \sum_{i = 1}^n \sum_{j \ne i} L( y_i, \hat{f}(x_{1j}, x_{2i}) )
        &= \sum_{i = 1}^n \sum_{c \in C} \sum_{\substack{j \ne i \\ x_{1j} = c }} L( y_i, \hat{f}(x_{1j}, x_{2i}) ) \\
        &= \sum_{i = 1}^n \sum_{c \in C} \sum_{\substack{j \ne i \\ x_{1j} = c }} L( y_i, \hat{f}(c, x_{2i}) ) \\
        &= \sum_{i = 1}^n \sum_{c \in C} \left(  L( y_i, \hat{f}(c, x_{2i}) ) \sum_{\substack{j \ne i \\ x_{1j} = c }} 1 \right)
    \end{align*}
    Note that
    $$
        \sum_{\substack{j \ne i \\ x_{1j} = c }} 1 
        = |\{ j : j \ne i, x_{1j} = c \}|
        = n_c - 1\{x_{1i} = c\},
    $$
    which gives the desired result.
\end{proof}

\begin{proof}[Proof of Proposition \ref{prop:conservative-bounds}]
    \hypertarget{pf:conservative-bounds}{}
    Recall from definition \ref{def:estimator-for-r} that
    $$\hat{r} = \frac{1}{n(n-1)}\sum_{i = 1}^n \sum_{j \ne i} L( y_i, \hat{f}(x_{1j}, x_{2i}) ).$$
    For each $i \ne j$, denote the shuffled pair of covariates $x = (x_{1j}, x_{2i})$. 
    Since $y_i \in \{0, 1\}$ and $0 \leq \hat{f}_{min}(x) \leq \hat{f}(x) \leq \hat{f}_{max}(x) \leq 1$, either $\hat{f}_{min}(x)$ or $\hat{f}_{max}(x)$ is farther from $y_i$ than $\hat{f}(x)$ is. Therefore,
    $$L(y_i, \hat{f}(x)) \leq L_{max} \equiv \max\{ L(y_i, \hat{f}_{min}(x)), L(y_i, \hat{f}_{max}(x)) \}$$ 
    because 
    $L$ increases monotonically with $|y - \hat{y}|$.
    
    To compute the lower bound $L_{min}$, note either $y_i \in [\hat{f}_{min}(x), \hat{f}_{max}(x)]$ or $y_i$ is outside of this range.
    If $y_i$ is outside of the range, then by the same logic as above, $L(y_i, \hat{f}(x)) \geq \min\{ L(y_i, \hat{f}_{min}(x)), L(y_i, \hat{f}_{max}(x)) \}$.
    If, however, $y_i$ is within this range, then $\hat{f}_{min}(x) = 0$ or $\hat{f}_{max}(x) = 1$ since $y_i \in \{0,1\}$ and $\mathcal{Y} = [0,1]$. Therefore, $\min\{ L(\hat{f}_{min}(x)), L(\hat{f}_{max}(x)) \} = 0$. Thus, either way,
    $$L(y_i, \hat{f}(x)) \geq L_{min} \equiv \max\{ L(y_i, \hat{f}_{min}(x)), L(y_i, \hat{f}_{max}(x)) \}$$ 
    Hence, $L(y_i, \hat{f}(x)) \in [L_{min}, L_{max}]$, so the desired result follows directly.
\end{proof}

\begin{proof}[Proof of Proposition \ref{prop:school-admissions-tight-identification}]
    \hypertarget{pf:school-admissions-tight-identification}{}
    To show that the lower bound is tight, consider the model
    $$Y = 1\{ X_1 + X_2 \geq 1 \}, \, Z = X_2.$$
    That is, $Y = 1\{X_1 + Z \geq 1 \}$. Then, by the law of iterated expectation,
    \begin{align*}
        E[Y \mid X_1 = 0] 
        &= E_Z[E[Y \mid X_1 = 0, Z]] \\
        &= E[Y \mid X_1 = 0, Z = 1]P(Z = 1) + E[Y \mid X_1 = 0, Z = 0]P(Z = 0) \\
        &= E[Y \mid X_1 = 0, Z = 1]P(Z = 1)
    \end{align*}
    since $E[Y \mid X = 0, Z = 0] = 0$. Thus, the lower bound is tight.
    
    Similarly, to see that the upper bound is tight, consider the model
    $$Y = 1\{ X_1 - X_2 \geq 0 \}, \, Z = X_2.$$
    That is, $Y = 1\{ X_1 - Z \geq 0 \}$, and
    \begin{align*}
        E[Y \mid X_1 = 0] 
        &= E_Z[E[Y \mid X_1 = 0, Z]] \\
        &= E[Y \mid X_1 = 0, Z = 1]P(Z = 1) + E[Y \mid X_1 = 0, Z = 0]P(Z = 0) \\
        &= E[Y \mid X_1 = 0, Z = 1]P(Z = 1) + P(Z = 0)
    \end{align*}
    since $E[Y \mid X_1 = 0, Z = 0] = 1$. Thus, both bounds are tight.
\end{proof}

\begin{table}
\centerline{
\begin{tabular}{lrrrlrrr}
\\
\toprule
{} & \multicolumn{4}{l}{Reliance} & \multicolumn{3}{l}{Huber Estimate Coefficient} \\
{} &    Gender & Experience & \multicolumn{2}{l}{Alignment} &    Gender & Experience &  Alignment \\
Justice          &           &            &           &    &           &            &           \\
\midrule
Clarence Thomas     &  6.194115 &   0.914656 &  1.074321 &    & -4.066875 &   0.045477 & -0.967887 \\
William Brennan     &  2.024609 &  -0.012736 &  0.830177 &    &  2.069208 &   0.007237 &  0.953094 \\
John Roberts        &  1.028896 &   0.124092 & -0.179966 &    &  1.231796 &   0.013199 & -0.665908 \\
Antonin Scalia      &  0.720167 &   0.214415 & -0.038943 &    &  1.240160 &   0.010162 & -0.116746 \\
Lewis Powell        &  0.673571 &   0.242610 &  0.019986 &    & -1.705089 &  -0.015079 & -0.063667 \\
Sandra Day O'Connor &  0.543206 &   0.090279 &  0.133488 &    &  1.336867 &   0.011487 & -0.851091 \\
Anthony Kennedy     &  0.521350 &   0.101282 &  0.437427 &    &  1.079679 &  -0.003048 & -0.948730 \\
Stephen Breyer      &  0.452215 &   0.155556 &  0.153670 &    &  1.030942 &  -0.004663 & -0.779379 \\
David Souter        &  0.337769 &  -0.008472 &  0.024470 &    &  1.030264 &   0.001045 & -0.996237 \\
William Rehnquist   &  0.244351 &   0.105431 &  0.445602 &    &  0.737906 &   0.012149 & -0.908850 \\
John Stevens        &  0.232185 &   0.052248 &  0.900550 &    &  0.799289 &   0.012740 & -1.347174 \\
Brett Kavanaugh     &  0.191025 &   1.165301 &  0.924661 &    & -0.339167 &  -0.014464 & -0.696940 \\
Harry Blackmun      &  0.182901 &   0.130280 & -0.002453 &    &  0.829614 &  -0.009206 & -0.474451 \\
Ruth Ginsburg       &  0.143509 &   0.114229 & -0.219453 &    &  0.522730 &  -0.004648 & -0.834549 \\
Elena Kagan         &  0.129975 &   0.209616 & -0.342419 &    &  0.159647 &  -0.005856 & -1.090435 \\
Samuel Alito        &  0.036339 &   0.045138 &  0.279884 &    &  0.397898 &  -0.003541 & -0.768242 \\
Neil Gorsuch        &  0.005193 &   0.118988 &  0.425556 &    &  0.058726 &   0.015541 & -0.688322 \\
Byron White         &  0.003446 &   0.028180 &  0.234361 &    &  0.210519 &   0.006359 & -0.573768 \\
Warren Burger       &  0.001972 &   0.003847 &  0.274191 &    &  0.101611 &   0.001788 & -0.712125 \\
Sonia Sotomayor     & -0.001649 &   0.023033 &  1.127161 &    &  0.032598 &  -0.004273 & -1.648903 \\
Thurgood Marshall   & -0.050600 &   0.772543 & -0.121786 &    &  0.235322 &  -0.028233 & -0.423940 \\
\bottomrule
\\
\end{tabular}
}
\caption{Reliance and Coefficients}
\label{table:reliance-and-coefficients}
\end{table}

\section*{Acknowledgments}

This research was supported by funding from the Cornell University College of Arts and Sciences.

\bibliography{refs}

\begin{thebibliography}{23}
\providecommand{\natexlab}[1]{#1}
\providecommand{\url}[1]{\texttt{#1}}
\expandafter\ifx\csname urlstyle\endcsname\relax
  \providecommand{\doi}[1]{doi: #1}\else
  \providecommand{\doi}{doi: \begingroup \urlstyle{rm}\Url}\fi

\bibitem[Goulette et~al.(2015)Goulette, Wooldredge, Frank, and
  Travis]{Goulette}
Natalie Goulette, John Wooldredge, James Frank, and Lawrence Travis.
\newblock From initial appearance to sentencing: Do female defendants
  experience disparate treatment?
\newblock \emph{Journal of Criminal Justice}, 43\penalty0 (5):\penalty0
  406--417, 2015.
\newblock ISSN 0047-2352.
\newblock \doi{https://doi.org/10.1016/j.jcrimjus.2015.07.003}.
\newblock URL
  \url{https://www.sciencedirect.com/science/article/pii/S0047235215000665}.

\bibitem[Staton et~al.(2007)Staton, Panda, Chen, Genao, Kurz, Pasanen,
  Mechaber, Menon, O'Rorke, Wood, Rosenberg, Faeslis, Carey, Calleson, and
  Cykert]{Staton}
Lisa~J. Staton, Mukta Panda, Ian Chen, Inginia Genao, James Kurz, Mark Pasanen,
  Alex~J. Mechaber, Madhusudan Menon, Jane O'Rorke, JoAnn Wood, Eric Rosenberg,
  Charles Faeslis, Tim Carey, Diane Calleson, and Sam Cykert.
\newblock When race matters: disagreement in pain perception between patients
  and their physicians in primary care.
\newblock \emph{Journal of the National Medical Association}, 99\penalty0
  (5):\penalty0 532--538, 2007.

\bibitem[sff(2022)]{sffa}
{Students for Fair Admissions v. President and Fellows of Harvard College},
  2022.
\newblock URL \url{www.oyez.org/cases/2022/20-1199}.

\bibitem[Breiman(2002)]{Breiman_Manual}
Leo Breiman.
\newblock Manual on setting up, using, and understanding random forests v3. 1.
\newblock page~14, 2002.

\bibitem[Fisher et~al.(2019)Fisher, Rudin, and Dominici]{Fisher}
Aaron Fisher, Cynthia Rudin, and Francesca Dominici.
\newblock All models are wrong, but many are useful: Learning a variable’s
  importance by studying an entire class of prediction models simultaneously.
\newblock \emph{Journal of Machine Learning Research}, 20, 12 2019.
\newblock URL \url{http://arxiv.org/abs/1801.01489}.

\bibitem[Coyle et~al.(2003)Coyle, Buxton, and O’Brien]{Coyle}
Douglas Coyle, Martin~J Buxton, and Bernie~J O’Brien.
\newblock Measures of importance for economic analysis based on decision
  modeling.
\newblock \emph{Journal of Clinical Epidemiology}, 56\penalty0 (10):\penalty0
  989–997, 10 2003.
\newblock ISSN 08954356.
\newblock \doi{10.1016/S0895-4356(03)00176-8}.

\bibitem[Barredo~Arrieta et~al.(2020)Barredo~Arrieta, Díaz-Rodríguez,
  Del~Ser, Bennetot, Tabik, Barbado, Garcia, Gil-Lopez, Molina, Benjamins,
  Chatila, and Herrera]{Barredo}
Alejandro Barredo~Arrieta, Natalia Díaz-Rodríguez, Javier Del~Ser, Adrien
  Bennetot, Siham Tabik, Alberto Barbado, Salvador Garcia, Sergio Gil-Lopez,
  Daniel Molina, Richard Benjamins, Raja Chatila, and Francisco Herrera.
\newblock Explainable artificial intelligence (xai): Concepts, taxonomies,
  opportunities and challenges toward responsible ai.
\newblock \emph{Information Fusion}, 58:\penalty0 82–115, 6 2020.
\newblock ISSN 15662535.
\newblock \doi{10.1016/j.inffus.2019.12.012}.

\bibitem[Lundberg and Lee(2017)]{Lundberg}
Scott~M. Lundberg and Su-In Lee.
\newblock A unified approach to interpreting model predictions.
\newblock In \emph{Proceedings of the 31st International Conference on Neural
  Information Processing Systems}, NIPS'17, page 4768–4777, Red Hook, NY,
  USA, 2017. Curran Associates Inc.
\newblock ISBN 9781510860964.

\bibitem[Robnik-{\v{S}}ikonja and Bohanec(2018)]{Robnik}
Marko Robnik-{\v{S}}ikonja and Marko Bohanec.
\newblock \emph{Perturbation-Based Explanations of Prediction Models}, pages
  159--175.
\newblock Springer International Publishing, Cham, 2018.
\newblock ISBN 978-3-319-90403-0.
\newblock \doi{10.1007/978-3-319-90403-0_9}.
\newblock URL \url{https://doi.org/10.1007/978-3-319-90403-0_9}.

\bibitem[Breiman(2001)]{Breiman}
Leo Breiman.
\newblock Random forests.
\newblock \emph{Machine learning}, 45:\penalty0 5--32, 2001.

\bibitem[Balagopalan et~al.(2022)Balagopalan, Zhang, Hamidieh, Hartvigsen,
  Rudzicz, and Ghassemi]{Balagopalan}
Aparna Balagopalan, Haoran Zhang, Kimia Hamidieh, Thomas Hartvigsen, Frank
  Rudzicz, and Marzyeh Ghassemi.
\newblock The road to explainability is paved with bias: Measuring the fairness
  of explanations.
\newblock page 1194–1206. ACM, 6 2022.
\newblock ISBN 978-1-4503-9352-2.
\newblock \doi{10.1145/3531146.3533179}.
\newblock URL \url{https://dl.acm.org/doi/10.1145/3531146.3533179}.

\bibitem[Marco~Tulio et~al.(2016)Marco~Tulio, Singh, and Guestrin]{Ribeiro}
Ribeiro Marco~Tulio, Sameer Singh, and Carlos Guestrin.
\newblock Why should {I} trust you?: Explaining the predictions of any
  classifier.
\newblock In \emph{ACM SIGKDD International Conference on Knowledge Discovery
  and Data Mining}, pages 1135--1144. ACM, 2016.

\bibitem[Gregorutti et~al.(2017)Gregorutti, Michel, and
  Saint-Pierre]{Gregorutti_Michel_Saint-Pierre_2017}
Baptiste Gregorutti, Bertrand Michel, and Philippe Saint-Pierre.
\newblock Correlation and variable importance in random forests.
\newblock \emph{Statistics and Computing}, 27\penalty0 (3):\penalty0 659–678,
  5 2017.
\newblock ISSN 0960-3174, 1573-1375.
\newblock \doi{10.1007/s11222-016-9646-1}.

\bibitem[Corbett{-}Davies et~al.(2017)Corbett{-}Davies, Pierson, Feller, Goel,
  and Huq]{DBLP:journals/corr/Corbett-DaviesP17}
Sam Corbett{-}Davies, Emma Pierson, Avi Feller, Sharad Goel, and Aziz Huq.
\newblock Algorithmic decision making and the cost of fairness.
\newblock \emph{CoRR}, abs/1701.08230, 2017.
\newblock URL \url{http://arxiv.org/abs/1701.08230}.

\bibitem[Chang et~al.(2020)Chang, Chiam, Fu, Wang, Zhang, and
  Danescu{-}Niculescu{-}Mizil]{Chang-ConvoKit}
Jonathan~P. Chang, Caleb Chiam, Liye Fu, Andrew~Z. Wang, Justine Zhang, and
  Cristian Danescu{-}Niculescu{-}Mizil.
\newblock Convokit: {A} toolkit for the analysis of conversations.
\newblock \emph{CoRR}, abs/2005.04246, 2020.
\newblock URL \url{https://arxiv.org/abs/2005.04246}.

\bibitem[Danescu-Niculescu-Mizil et~al.(2012)Danescu-Niculescu-Mizil, Lee,
  Pang, and Kleinberg]{Danescu-Niculescu-Mizil+al:12a}
Cristian Danescu-Niculescu-Mizil, Lillian Lee, Bo~Pang, and Jon Kleinberg.
\newblock Echoes of power: {Language} effects and power differences in social
  interaction.
\newblock In \emph{Proceedings of WWW}, pages 699--708, 2012.

\bibitem[Cai et~al.(2023)Cai, Gupta, Keith, O'Connor, and Rice]{cai_et_al_2023}
Erica Cai, Ankita Gupta, Katherine Keith, Brendan O'Connor, and Douglas~R Rice.
\newblock "let me just interrupt you": Estimating gender effects in supreme
  court oral arguments, 2 2023.
\newblock URL \url{osf.io/preprints/socarxiv/4dngy}.

\bibitem[Pérez(2016)]{gender-guesser}
Israel~Saeta Pérez.
\newblock Gender guesser.
\newblock \url{https://github.com/lead-ratings/gender-guesser}, 2016.

\bibitem[Vebman(2023)]{vebman-scorpus}
Daniel Vebman.
\newblock Scorpus.
\newblock \url{https://github.com/danielVebman/scorpus}, 2023.

\bibitem[Wolfson(2002)]{Wolfson}
Warren~D Wolfson.
\newblock Oral argument: Does it matter?
\newblock \emph{Indiana Law Review}, 35:\penalty0 451--456, 2002.
\newblock URL \url{https://mckinneylaw.iu.edu/ilr/pdf/vol35p451.pdf}.

\bibitem[Duvall(2007)]{Duvall}
Michael Duvall.
\newblock When is oral argument important? a judicial clerk's view of the
  debate.
\newblock \emph{The Journal of Appellate Practice and Process}, 9:\penalty0
  121--130, 2007.
\newblock URL
  \url{https://lawrepository.ualr.edu/appellatepracticeprocess/vol9/iss1/5}.

\bibitem[Coleman(2023)]{Coleman}
Robert Coleman, Jr.
\newblock The vanishing oral argument: Why it matters and what to do about it.
\newblock \emph{American Bar Association Appellate Issues}, 2023.
\newblock URL
  \url{https://www.americanbar.org/groups/judicial/publications/appellate_issues/2020/winter/the-vanishing-oral-argument/}.

\bibitem[Huber(1981)]{huber-robust-rlm}
Peter~J. Huber.
\newblock \emph{Regression}, pages 153--198.
\newblock John Wiley \& Sons, Ltd, 1981.
\newblock ISBN 9780471725251.
\newblock \doi{https://doi.org/10.1002/0471725250.ch7}.
\newblock URL
  \url{https://onlinelibrary.wiley.com/doi/abs/10.1002/0471725250.ch7}.

\end{thebibliography}

\end{document}